\newtheorem{propi}{Proposition}
\newtheorem{defi}{Definition}
\newtheorem{teo}{Theorem}
\newtheorem*{nonteo}{Theorem}
\newtheorem{coro}{Corollary}
\newtheorem{exa}{Example}
\newcommand{\B}{\boldsymbol}
\newcommand{\C}{\mathcal}
\title{\bf Fuzzy Classification Aggregation}
\author[1,2]{Federico Fioravanti\footnote{f.fioravanti@uva.nl\\ I am grateful to Fernando Tohm\'e, Ulle Endriss, Agust\'in Bonifacio, Jordi Mass\'o, participants of the COMSOC seminar at the ILLC, and two anonymous reviewers for comments and suggestions that led to an improvement of the paper.}} 
\affil[1]{GATE, Saint-Etienne School of Economics, Jean Monnet University, Saint-Etienne, France}
\affil[2]{Institute for Logic, Language and Computation, University of Amsterdam, Amsterdam, The Netherlands}
\date{\vspace{-5ex}}
\begin{document}

\maketitle
\begin{abstract}
We consider the problem where a set of individuals has to classify $m$ objects into $p$ categories and does so by aggregating the individual classifications.
We show that if $m\geq 3$, $m\geq p\geq 2$, and classifications are fuzzy, that is, objects belong to a category to a certain degree, then an optimal and independent aggregator rule that satisfies a weak unanimity condition belongs to the family of Weighted Arithmetic Means.
We also obtain characterization results for $m= p= 2$.

\noindent {\it Keywords}: Classification Aggregation; Weighted Arithmetic Mean; Fuzzy Setting.

\noindent {\it JEL Classification}: D71.

\end{abstract}
\section{Introduction}
The study of the problem of individuals classifying objects can be traced back to \citet{kasher1997question}.
In their paper, they consider a finite society that has to determine which one of its subsets of members consists of exactly those individuals that can be deemed to be part of a group named $J$. 
\citet{kasher1997question} gave rise to what can be regarded as a subdomain within social choice theory, namely, the study of the Group Identification Problem \citep[see, among others,][for more details]{samet2003between,miller2008group,fioravanti2021alternative}. 

A more general problem, where a group of individuals has to classify $m$ objects into $p$ different categories, has been considered by \citet{maniquet2016theorem}.
In their work, they study the case where there are at least as many objects as categories, at least three categories and all the categories must be filled with at least one object.\footnote{This surjectivity condition becomes relevant, for example, in situations where $m$ workers have to be assigned to $p$ tasks, and no task can be left unassigned.
It can also be seen as a sanity check of the classification process. 
For instance, consider a set of neural networks that is being trained to classify images into images of cats, dogs and rabbits. If a database with a large number of images of each of the three animals is used, then a neural network that fails to classify at least one image as a rabbit can be considered as not reliable.  }
This aspect is combined with three other seemingly natural properties, namely Unanimity, which states that when all individuals make the same classification then the aggregate classification complies; Independence, which is the requirement that an object be classified by the aggregator in the same way at any two classification profiles if every individual classifies it equally in both profiles; and Non-Dictatorship, which requires that there is no individual such that her classification is always selected.
The result is that there is no aggregator satisfying these three conditions.
Recently, \citet{cailloux2023classification} weaken the Unanimity axiom and find a weakening of the impossibility result that holds for $m>p\geq 2$, with the existence of an essential dictator.
This is an individual such that a permutation of her classification is always selected.
They also characterize the unanimous and independent aggregators when $m=p=2$. 

The surjectivity condition in \citet{maniquet2016theorem}, turns out to be a rather demanding requirement.
Without it, the impossibility result no longer holds, and many non-dictatorial aggregators satisfying Unanimity and Independence can be found in the literature \citep[see, for example,][]{kasher1997question}.

Our work considers the problem where $m$ objects have to be classified into $p$ categories, $m\geq p\geq 2$, and the classifications, both by the individuals and the rule, indicate degrees of membership of the objects to each of the categories.
Two possible scenarios where this setting may arise, are the following.
A national government has to allocate equal funds to all regions, to be used for different regional ministries (security, education, and health), with the central government providing some conditions on how the funds are distributed. 
It is natural to assume in this context, that the government sets a minimum amount of money that should be spent in each of the ministries, at least at a national level. 
Members of the government should decide on how the funds might be allocated for security, education, and health, ensuring that each category, across the country, receives at least the amount provided by the government to each region.
As an illustration, one region could be tasked with allocating half of the funds to security, a quarter to education, and a quarter to health. 
Another scenario involves assigning individuals to various mandatory tasks with equal time requirements, where proportions dictate how each person's time is allocated. 
For example, if tasks include laundry, lawn mowing, and grocery shopping, Bob, one of the individuals assigned to do the tasks, might spend half of his time on laundry, a quarter on mowing the lawn, and the remaining quarter on grocery shopping.\footnote{These examples illustrate why the notion of ``degree of membership'' can be considered natural to use, instead of considering probabilities. 
For instance, in the tasks allocation example, it is better if we know that Bob will spend with certainty half of his time doing the laundry, rather than knowing that there is a $50\%$ chance of Bob leaving all the clothes dirty.}


Fuzzy preferences, those that represent vagueness and uncertainty, are a useful tool that has been used in many aggregation problems. Noteworthy contributions include the work of \citet{DUTTA1987215}, who deals with exact choices under vague preferences, \citet{DUTTA198653} who investigate the structure of fuzzy aggregation rules determining fuzzy social orderings, and recent work by \citet{duddy2018some}, who prove the fuzzy counterpart of Weymark’s general oligarchy theorem \citep{weymark1984arrow}, and by \citet{raventos2020arrow}, who analyze \citeauthor{Arrow1951-ARRIVA}'s \citeyearpar{Arrow1951-ARRIVA} theorem in a fuzzy setting.

Numerous studies have addressed the Group Identification Problem when the preferences or classifications are not crisp.
For instance, \citet{CHO201866} present a model of group identification for more than two groups, allowing fractional classifications but no fractional opinions, \citet{ballester2008model} deal with fuzzy opinions in a sequential model, and \citet{fioravanti2022fuzzy} show that some of the impossibility results of \citet{kasher1997question} can be avoided. 
\citet{alcantud2019liberalism} analyze the classification problem in a fuzzy setting, and consider a strong fuzzy counterpart of the surjectivity condition, extending the impossibility result of \citet{maniquet2016theorem}.

We avoid both impossibility results, the crisp and the fuzzy ones, by considering a different, but also very natural, fuzzy surjectivity condition.
This condition, which in the crisp setting is extremely demanding, becomes a requirement that in the fuzzy setting allows the existence of a family of aggregators satisfying certain particular properties.
Thus, it is demanding enough to find a characterization result, but not so much to not be able to find one.
Building on some previous results in functional analysis by \citet{aczel1980characterizationwam,aczel1984aggregation}, and \citet{wagner1982allocation}, we show that the only optimal aggregators that are independent and that satisfy a weak version of Unanimity, are the Weighted Arithmetic Means (WAMs).
These rules aggregate individual inputs by computing a weighted average of their classifications, with weights reflecting the relative influence of each individual.

The Weighted Arithmetic Means are particularly significant for their theoretical and practical implications. 
Theoretically, they provide a framework that is both mathematically elegant and normatively appealing. 
By resolving prior impossibility results, they demonstrate that meaningful aggregation is achievable in the fuzzy setting under natural surjectivity conditions. 
Practically, their versatility and interpretability make them well-suited to a wide range of real-world decision-making problems.

In economic contexts, the Weighted Arithmetic Means offer a powerful tool for balancing efficiency and equity. 
For example, they can model scenarios where agents differ in expertise, authority, or stake, such as resource allocation in government planning, voting systems in elections, or task assignment in organisational settings. 
The weights assigned to individuals can reflect various factors, such as their expertise in evaluating specific categories, the resources they contribute, or their perceived reliability.

Consider the previously introduced example where a national government has to allocate equal funds to all regions, but now the members of the government are also representative of the different regions. 
A WAM-based approach allows to aggregate regional preferences into a collective allocation while respecting the relative importance or influence of each region. 
Similarly, in collaborative decision-making within organisations or project teams, WAMs facilitate the fair integration of expert opinions by assigning higher weights to individuals with specialized knowledge, ensuring that critical insights are not overlooked.

Furthermore, WAMs embody a natural interpretation of fairness: individuals with equal expertise or standing are treated equally, while those with greater expertise are appropriately weighted to reflect their contributions. 
This fairness property is particularly valuable in settings where the credibility or importance of agents varies, ensuring that outcomes remain justifiable and transparent.

The significance of these results lies not only in their theoretical robustness but also in their practical applicability to diverse fields, including economics, political science, and artificial intelligence, where fuzzy and nuanced classifications are increasingly common. 
By providing a clear axiomatic foundation, the Weighted Arithmetic Means offer a bridge between theory and application, enabling decision-makers to navigate complex classification problems in a principled and effective manner.


The plan of the paper is as follows.
Section \ref{basic} presents the basic notions and axioms that we use, while we present the results in Section \ref{results}. The Subsection \ref{sectionindependence} we include examples that show the independence of the axioms of our main result, Theorem~\ref{main}.
Finally, Section \ref{remarks} contains some concluding remarks.

\section{Basic Notions and Axioms}\label{basic}

Let $N=\{1,\ldots,n\}$ be a finite set of individuals and let $X=\{x_1,\ldots,x_m\}$ be a set of $m$ objects that need to be classified into the $p$ categories of a set $P$, with $p\geq 2$. 

The individuals classify each object according to a partial degree of membership to each category.
In the crisp setting, introduced by \citet{maniquet2016theorem}, classifications are surjective mappings $c:X\rightarrow P$, that is, every category must have at least one object classified into.
In the fuzzy setting, a {\it fuzzy classification} is a mapping $c:X\rightarrow [0,1]^P$ such that $\sum_{j=1}^mc(x_j)_t\geq 1$ for all $t\in P$ and $\sum_{t=1}^pc(x)_t=1$ for all $x\in X$. 
The former condition is the fuzzy counterpart of the surjectivity of the classification function, while the latter is the fuzzy counterpart of the assumption that every object $x$ must be assigned to exactly one category by each voter.
It is easy to see that these conditions imply that there must be at least as many objects as categories, thus $m\geq p$.

We use $\mathcal{C}$ to denote the set of fuzzy classifications, and every \mbox{$\boldsymbol{c}=(c_1,\ldots,c_n)\in \mathcal{C}^N$} is a {\it fuzzy classification profile}.
Given $\mathbf{c}\in\mathcal{C}^N$ and $x\in X$, we denote with $\mathbf{c}^x\in \C{C}^N|_x$ the {\em fuzzy classification profile restricted to x}, such that the entry $\mathbf{c}^x_{ij}$ indicates the degree of membership of the object $x$ to the category $j$, according to the agent $i$.
Thus, we have that $\boldsymbol{c}^x_{ij}=c_{i}(x)_j$.
This allows the representation of a fuzzy classification profile as a set of $m$ matrices in a subset of $[0,1]^{N\times P}$, one for each object, where row $i$ indicates the classifications given by the individual $i$, and column $t$ of the matrix indicates the classifications given by the set of individuals into the category $t$.

A {\it fuzzy classification aggregation function} (FCAF) is a mapping \mbox{$\alpha:\mathcal{C}^N\rightarrow \mathcal{C}$} such that $\alpha(\boldsymbol{c})(x)$ indicates the degrees of membership to the different categories of the object $x$.
We call the outcome of $\alpha$, the {\it fuzzy social classification}.

Next, we introduce a particular FCAF, the {\it Weighted Arithmetic Mean}. 
Let $\boldsymbol{w}=(w_1,\ldots,w_n)$ be a set of weights such that $w_i\in[0,1]$ for all $i\in N$ and $\sum_{i=1}^n w_i=1$.
Then \( \alpha_{\boldsymbol{w}}:\mathcal{C}^N\rightarrow\mathcal{C}\) is such that for all $x\in X$, $$\alpha_{\boldsymbol{w}}(\boldsymbol{c})(x)=w_{1}c_1(x)+\cdots+w_{n}c_n(x).$$
We say that the set of weights is {\em degenerate} if there is an $i\in N$ such that $w_i=1$.
If we think of the FCAF as a group of experts classifying objects, the WAM can be appropriate for situations where the individuals differ in expertise, with more experienced voters having higher weights.\footnote{Weighted aggregators have been previously studied in the social choice literature, in particular in the setting where monetary transfers are allowed \citep[see][]{roberts1979aggregation,MISHRA2012283}. See also \citet{stone1961opinionpooling} and \citet{dietrich2016probopinionpooling} for an use of this aggregator in opinion pooling.}
In particular, if \mbox{$w_i=\frac{1}{n}$} for all $i\in N$, we call this FCAF the Arithmetic Mean.

\begin{exa}
Consider a situation where there are $n=3$ individuals that have to classify $m=3$ objects into $p=3$ categories. 
Let $\boldsymbol{c}$ be a fuzzy classification profile such that:
\vspace{2mm}

$\boldsymbol{c}^{x_1}=\begin{pmatrix}
\frac{1}{3} & \frac{2}{3} & 0\\
1 & 0 & 0\\
\frac{1}{2} & \frac{1}{4} & \frac{1}{4}\\
\end{pmatrix}$,
$\boldsymbol{c}^{x_2}=\begin{pmatrix}
\frac{2}{3} & 0 & \frac{1}{3}\\
0 & 0 & 1\\
\frac{1}{2}  & 0 & \frac{1}{2}\\
\end{pmatrix}$, and
$\boldsymbol{c}^{x_3}=\begin{pmatrix}
0 & \frac{1}{3} & \frac{2}{3}\\
0 & 1 & 0\\
0 & \frac{3}{4} & \frac{1}{4} \\
\end{pmatrix}$.
\vspace{2mm}

In this example, individual $2$ classifies the object $x_2$ as being only part of category $3$ (second row of the second matrix).
If we consider the Weighted Arithmetic Mean with a non-degenerate set of weights $\boldsymbol{w}=(\frac{1}{2},0,\frac{1}{2})$, we obtain the following fuzzy social classification: 
\vspace{2mm}

\mbox{$\alpha_{\boldsymbol{w}}(\boldsymbol{c})(x_1)=(\frac{10}{24},\frac{11}{24},\frac{3}{24})$}, \mbox{$\alpha_{\boldsymbol{w}}(\boldsymbol{c})(x_2)=(\frac{14}{24},0,\frac{10}{24})$}, and \mbox{$\alpha_{\boldsymbol{w}}(\boldsymbol{c})(x_3)=(0,\frac{13}{24},\frac{11}{24})$}.
\end{exa}

In the following, we introduce a number of axioms, i.e., fundamental normative requirements that, depending on the specific classification problem at hand, a reasonable FCAF should satisfy.
The first axiom can be seen as an efficiency property.
It states that if all the individuals agree on how many objects should be classified into a given category, then the social classification must respect that.
\begin{defi}[Optimality]
An FCAF is optimal if for all $\boldsymbol{c}\in\mathcal{C}^N$ and all $t\in P$ such that $\sum_{j=1}^{m}c_i(x_j)_t=h$ for all $i\in N$, it is the case that $\sum_{j=1}^{m}\alpha(\boldsymbol{c})(x_j)_t=h$.  
\end{defi}
This axiom is useful in a setting where categories need to fulfil certain requirements and resources should not be wasted.
For example, in the case where a set of managers has to classify a group of workers as members of different departments within a company, in order to make them operative.
If all the managers consider that department A requires only two workers to be operative, then an optimal aggregator only classifies two workers into that department. 

The following axiom states that the fuzzy social classification of an object in two different fuzzy classification profiles does not change if the classification regarding that object is the same in both profiles for every individual. 
\begin{defi}[Independence]
An FCAF is independent if for all $\boldsymbol{c},\boldsymbol{c'}\in\mathcal{C}^N$ and all $x\in X$ such that \mbox{$c_i(x)=c'_i(x)$} for all $i\in N$, it is the case that \mbox{$\alpha(\boldsymbol{c})(x)=\alpha(\boldsymbol{c'})(x)$}.
 \end{defi}
This property is desirable in settings such as budget allocations, where the classification of one item (e.g., education spending) should not depend on unrelated factors (e.g., infrastructure spending).
Independence is a relatively weak requirement, as it asserts that the values of certain variables should be disregarded, in a setting where constraints are already imposed on the sums of these variables.
We can note that an independent FCAF $\alpha$ can be seen as a collection of mappings $(\alpha_x)_{x\in X}$, such that $\alpha_x:\C{C}^N|_x\rightarrow \C{C}|_x$ and $\alpha_x(\boldsymbol{c}^x)=\alpha(\boldsymbol{c})(x)$ \citep{cailloux2023classification}.
We call these mappings {\em Elementary} FCAFs.

A stronger version of Independence requires that if two different objects are classified equally in two different fuzzy classification profiles, then the fuzzy social classification must be the same for both objects.\footnote{This property is called Strong Label Neutrality in \citet{wagner1982allocation}.}
\begin{defi}[Symmetry]
 An FCAF is symmetric if for all $x,y\in X$ and all $\boldsymbol{c},\boldsymbol{c'}\in\mathcal{C}^N$ such that $c_i(x)=c'_i(y)$ for all $i\in N$, it is the case that \mbox{$\alpha(\boldsymbol{c})(x)=\alpha(\boldsymbol{c'})(y)$}.   
\end{defi}
Besides making the classifications of the rest of objects irrelevant when classifying, for example, an object $x_j$, a symmetric FCAF also ignores the names of the objects when classifying them.
In economic contexts, this can be seen as a fairness requirement, ensuring that identical cases are treated equally. 
For example, in labour market classifications, two identical workers should receive the same classification, regardless of arbitrary labels.
\citet{wagner1982allocation} shows that an optimal and symmetric FCAF has the same elementary FCAF for every object.

The next property states that if there is an object that is unanimously classified by the individuals, then the FCAF has to classify that object accordingly.
\begin{defi}[Unanimity]
An FCAF is unanimous if for all $\boldsymbol{c}\in\mathcal{C}^N$, all $x\in X$, all $t\in P$, and all $r\in[0,1]$ such that $c_1(x)_t=\cdots=c_n(x)_t=r$, it is the case that $\alpha(\boldsymbol{c})(x)_t=r$.    
\end{defi}
This property aligns with fundamental democratic principles in economic policy; if all stakeholders agree on a classification, it should be implemented.

A weaker version of Unanimity states that the only unanimous classification required to be respected is when all the individuals classify an object with a degree of membership $0$.
The intuition behind this axiom is that if all individuals consider that an object does not belong at all to a certain category, then the social classification should agree.
\begin{defi}[Zero Unanimity]
An FCAF is zero unanimous if for all \mbox{$\boldsymbol{c}\in\mathcal{C}^N$}, all $x\in X$, and all $t\in P$ such that \mbox{$c_1(x)_t=\cdots=c_n(x)_t=0$}, it is the case that $\alpha(\boldsymbol{c})(x)_t=0$.  
\end{defi}
This property can also be seen as a rationality constraint; if no member of the government wants to allocate funds for security in a specific region, then that region should not receive funds for that ministry.

Now we introduce our first fuzzy axiom, which can be seen as the fuzzy counterpart of Unanimity.\footnote{This axiom is called {\it coherence} by \citet{alcantud2019liberalism}. Here we use the same name used by \citet{fioravanti2022fuzzy}.}
It states that the degree to which an object is collectively classified must be between the classification degrees regarding that object for each of the individuals.
\begin{defi}[Fuzzy Consensus]
An FCAF satisfies Fuzzy Consensus if for all \mbox{$\boldsymbol{c}\in\mathcal{C}^N$}, all $x\in X$, and all $t\in P$, it is the case that $\alpha(\mathbf{c})(x)_t\in [\min_{i\in N}c_i(x)_t,\max_{i\in N}c_i(x)_t]$.    
\end{defi}
With an FCAF that satisfies Fuzzy Consensus, we can be confident that an object's classification remains close to the classifications assigned by the individuals, preventing extreme or arbitrary outcomes.

It is easy to see that Fuzzy Consensus implies Unanimity, and Unanimity implies Zero Unanimity.
These three axioms can be interpreted as the different `degrees' of consensus that we might require an aggregator to satisfy.

The next axiom states that there must not exist an individual that imposes her classification.
It is useful to prevent autocratic outcomes and ensure the participation of all the individuals.
\begin{defi}[Non-Dictatorship]
An FCAF is non-dictatorial if there is no individual $i\in N$ such that $\alpha(\mathbf{c})= c_i$ for all $\B{c}\in\mathcal{C}^N$.
\end{defi}
A stronger version of Non-Dictatorship states that the names of the individuals are not important for the aggregation of classifications.
This property ensures fairness and prevents undue influence based on identity rather than the merit of classifications.
\begin{defi}[Anonymity]
For a classification profile $\boldsymbol{c}\in \mathcal{C}^N$ and a permutation $\sigma:N\rightarrow N$ we define $\sigma(\boldsymbol{c})$ as the classification profile such that $\sigma(\boldsymbol{c})=\{\boldsymbol{c}_{\sigma(1)},\ldots,\boldsymbol{c}_{\sigma(n)}\}$.
An FCAF is anonymous if for all \mbox{$\boldsymbol{c}\in\mathcal{C}^N$} and all permutations $\sigma:N\rightarrow N$, it is the case that $\alpha(\sigma(\boldsymbol{c}))=\alpha(\boldsymbol{c})$.
\end{defi}

\section{Results}\label{results}
In the crisp setting, \citet{maniquet2016theorem} show the following impossibility result.
\begin{nonteo}{{\em \citep{maniquet2016theorem}}}
Let $m\geq p\geq3$.
There is no (non-fuzzy) Classification Aggregation Function that satisfies Independence, Unanimity, and Non-Dictatorship. 
\end{nonteo}
\citet{alcantud2019liberalism} extend this result to a fuzzy setting, by using Fuzzy Consensus and considering a stronger notion of surjectivity, where for each category there must exist an object with a degree of classification larger than $0.5$.
They show the existence of a fuzzy dictator, where whenever the dictator classifies an object into a category by more than $0.5$, then the object is classified into that category with a degree of more than $0.5$.
Thus, our main theorem can be seen as an escape from these impossibility results, even in a fuzzy setting.
\begin{teo}\label{main}
Let $m\geq 3$ and $m\geq p\geq 2$. 
An optimal Fuzzy Classification Aggregation Function satisfies Independence, Zero Unanimity, and Non-Dictatorship if, and only if, it is a Weighted Arithmetic Mean with a non-degenerate set of weights.
\end{teo}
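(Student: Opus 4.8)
The \emph{if} direction is a routine check: $\alpha_{\boldsymbol w}(\boldsymbol c)(x)=\sum_i w_i c_i(x)$ visibly depends on $\boldsymbol c$ only through $\boldsymbol c^{x}$; each output row is a convex combination of simplex vectors and $\sum_j\sum_i w_i c_i(x_j)_t=\sum_i w_i\sum_j c_i(x_j)_t\ge\sum_i w_i=1$, so the output is again a fuzzy classification; the output is $0$ in a category whenever every individual is; and, since $w_i<1$ for every $i$, for each prospective dictator $i$ one can pick $k\ne i$ with $w_k>0$ and a feasible profile where only $k$ disagrees with $i$ on some object, which makes $\alpha_{\boldsymbol w}$ disagree with $i$. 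So the content is the converse, and my plan has five steps: reduce to the elementary maps; turn surjectivity of the \emph{output} into a functional equation; solve it by boundedness; pin the solution down with Zero Unanimity; deduce non-degeneracy.

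By Independence, $\alpha=(\alpha_x)_{x\in X}$ with $\alpha_x\colon\Delta^N\to\Delta$ for $\Delta=\{v\in[0,1]^p:\sum_{t=1}^p v_t=1\}$, and for $m\ge 3$ the domain really is all of $\Delta^N$: any profile of simplex vectors chosen for a single object can be completed, using the remaining $m-1\ge 2$ objects, to a classification with $\sum_j c_i(x_j)_t\ge 1$ for all $i,t$. The heart is the output constraint $\sum_j\alpha_{x_j}(\boldsymbol c^{x_j})_t\ge 1$, which for $m=p$ becomes the \emph{equality} $\sum_j\alpha_{x_j}(\boldsymbol c^{x_j})=\mathbf 1$ (all classifications being doubly stochastic), and I would treat that case first. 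Freezing all agents but one and transporting that agent's mass between two objects while keeping her classification feasible — which is exactly where a third object, hence $m\ge 3$, is used — and subtracting two instances of the equality cancels all but two terms; permuting the two relevant rows of that agent's matrix then gives $\alpha_{x_1}(u)-\alpha_{x_1}(u')=\alpha_{x_2}(u)-\alpha_{x_2}(u')$, so evaluating at unanimous vertices forces $\alpha_{x_1}=\alpha_{x_2}$; all elementary maps then coincide with a single $F$, and $\sum_j F(\boldsymbol c^{x_j})=\mathbf 1$ identically. Perturbing one agent's rows by sum-zero vectors $\delta_1,\dots,\delta_m$ with $\sum_j\delta_j=0$, the others held at the barycentre $\bar v$, this identity reduces (after exploiting oddness) to Cauchy's equation $\psi(\delta)+\psi(\delta')=\psi(\delta+\delta')$ for $\psi(\delta)=F(\bar v+\delta,\bar v,\dots,\bar v)-\bar v$, near $0$ in the sum-zero subspace.

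Since $\psi$ is bounded, it must be linear, so $F$ is affine in its first argument near $\bar v$; iterating over the remaining arguments and over other base points — where I would lean on the functional-equation results of \citet{aczel1980characterizationwam,aczel1981rational} and \citet{wagner1982allocation} to pass from ``affine in pieces'' to globally affine — gives $F(\boldsymbol v)=\sum_i A_iv_i+b$. Zero Unanimity then finishes it (write $e_1,\dots,e_p$ for the vertices of $\Delta$): the all-unanimous-vertex profiles force $F$ to return that vertex, the profiles in which one agent reports a different vertex force the value into the span of the two vertices, and subtracting yields $A_i(e_r-e_s)=w_i(e_r-e_s)$ with one scalar $w_i\in[0,1]$ (using $e_1-e_3=(e_1-e_2)+(e_2-e_3)$ to make the eigenvalue independent of $r,s$), so $A_i=w_iI+d_i\mathbf 1^{\!\top}$ and $F(\boldsymbol v)=\sum_i w_iv_i+\text{const}$; the vertex conditions for $p\ge 2$ distinct vertices give $\sum_i w_i=1$ and a vanishing constant, so $F=\alpha_{\boldsymbol w}$. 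If some $w_i=1$ then $F$ is the $i$-th projection, a dictator, contradicting Non-Dictatorship, so $\boldsymbol w$ is non-degenerate.

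The step I expect to be the real obstacle is the case $m>p$, where the output constraint is a genuine inequality and the cancellation above breaks down. I would try to recover equalities by using Zero Unanimity to force tightness — for instance restricting to profiles in which all individuals concentrate the mass of chosen categories on just two objects, which zeroes out every other object's value there — and then re-run the equation on that restricted family before transferring the conclusion back. A secondary, ever-present difficulty is carrying out the ``piecewise Cauchy $\Rightarrow$ globally affine'' upgrade with no continuity assumption on $\alpha$, which is precisely what the cited functional-analysis literature is for.
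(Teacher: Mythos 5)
Your architecture differs from the paper's in a substantive way: where the paper fixes a category $t$, applies Acz\'el's characterization to the family of scalar maps $\{(\alpha_{x_j})_t\}_{j}$ to obtain per-category weights $w_i^t$, and then equates weights across categories by evaluating a permutation-type profile and solving a small linear system, you keep the elementary maps vector-valued, re-derive the Cauchy equation yourself (the mass-transport cancellation plus oddness is essentially the standard proof of the Acz\'el--Wagner lemma), and extract a single scalar $w_i$ per agent from the action of $A_i$ on the edge directions $e_r-e_s$ of the simplex. For $m=p$ this route is sound, and it is arguably cleaner on one point the paper glosses over: by working with the full matrix input you never need to argue that $(\alpha_{x_j})_t$ depends only on the $t$-th column of $\boldsymbol{c}^{x_j}$, which the paper's notation silently assumes. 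Your identity $e_1-e_3=(e_1-e_2)+(e_2-e_3)$ correctly does the job of the paper's system of equations for cross-category consistency of the weights.

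The genuine gap is exactly where you placed it: $m>p$. Your entire engine is the identity $\sum_j\alpha_{x_j}(\boldsymbol{c}^{x_j})=\mathbf{1}$, available from double stochasticity only when $m=p$. For $m>p$ the output constraint is just $\sum_j\alpha(\boldsymbol{c})(x_j)_t\geq 1$, and the slack is real: the total output mass is $m$ while only $p$ units are pinned down, so a single category's output column may sum to anything in $[1,\,1+(m-p)]$ even when every agent's input column for that category sums to exactly $1$. Your proposed repair --- concentrating category $t$'s mass on two objects and invoking Zero Unanimity to zero out the rest --- only reproduces the same one-sided inequality $\alpha(\boldsymbol{c})(x_1)_t+\alpha(\boldsymbol{c})(x_2)_t\geq 1$; it does not supply the matching upper bound, so the cancellation step never gets off the ground. (This is not a defect peculiar to your route: it is precisely hypothesis (2) of the Acz\'el theorem that the paper invokes, which the paper asserts for $(\alpha_{x_j})_t$ without verification when $m>p$.) Until output-side tightness is established on the sub-domain where the inputs are tight --- for instance by exploiting Zero Unanimity across several categories simultaneously, or by first characterizing $\alpha$ on profiles supported on $p$ of the $m$ objects and then extending --- your proof is complete only for $m=p$. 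The secondary deferral, passing from bounded and locally additive to globally affine, is unproblematic: that is exactly what the cited functional-equation results supply.
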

Now we present a result by \citet{aczel1984aggregation} that is useful for our proof.
\begin{nonteo}{\em \citep{aczel1984aggregation}}
A family of mappings \mbox{$\{\alpha^{j}:[0,1]^N\rightarrow[0,1]\}_{j=1}^m$} satisfies Zero Unanimity and the following condition:
\begin{equation}\label{optimality}
\text{if } z^j\in [0,1]^N \text{ and }  \sum_{j=1}^m z^j=(1,\ldots,1), \text{ then } \sum_{j=1}^m\alpha^{j}(z^j)=1    
\end{equation}
if, and only if, there exists a set of weights $(w_1,\ldots,w_n)$ such that for each $j=1,\ldots,m$ and any $z\in[0,1]^N$, we have that $$\alpha^{j}(z)=w_1z_{1}+\cdots + w_nz_{n}.$$
 
 \end{nonteo} 
\begin{proof}[Proof of Theorem~\ref{main}]
That a Weighted Arithmetic Mean with a non-degenerate set of weights satisfies the four axioms is straightforward to see.
For the only if part, we use the result by \citet{aczel1984aggregation}. 
Let $\alpha$ be an optimal FCAF that satisfies Independence and Zero Unanimity.
Recall that an independent FCAF can be seen as a collection of elementary FCAFs.
So we can consider a family of mappings \mbox{$\{\alpha_{x_j}:\C{C}^N|_{x_j}\rightarrow \C{C}|_{x_j}\}_{j=1}^m$}.
By optimality, we obtain that if $\sum_{j=1}^{m}c_i(x_j)_t=h$ for all $i\in N$, it is the case that $\sum_{j=1}^{m}\alpha_{x_j}(\boldsymbol{c}^{x_j})_t=h$.
Then we have that for each $t\in P$ there is a family of mappings $\{\alpha_{x_j}|_t:\C{C}^N|_{x_j\land t}\rightarrow[0,1]\}_{j=1}^m$ satisfying condition (\ref{optimality}).\footnote{$\C{C}^N|_{x_j\land t}$ is the set of classifications restricted to the object $x_j$ and the category $t$.
That is, an element of $\C{C}^N|_{x_j\land t}$ is the column $t$ of a matrix in $\C{C}^N|_{x_j}$.} 
The function $\alpha_{x_j}|_t$ maps the column $t$ in $\boldsymbol{c}^{x_j}$ and outputs the degree of classification of object $x_j$ into the category $t$.
Due to optimality, regardless of the values that are not in column $t$, we can be sure that condition (\ref{optimality}) is satisfied by $\alpha_{x_j}|_t$ and that the degree of classification of the object $x_j$ into the category $t$ depends on a weighted aggregator.
Thus we have that for each category $t$ there is a set of weights $(w^t_1,\ldots,w^t_n)$ such that for $j=1,\ldots,m$, it is the case that
$$\alpha_{x_j}|_t(\boldsymbol{c}^{x_j}_{1t},\ldots,\boldsymbol{c}^{x_j}_{nt})=w^t_1\boldsymbol{c}^{x_j}_{1t}+\cdots+w^t_n\boldsymbol{c}^{x_j}_{nt}.$$
What is left to show is that for every $i\in N$, the weights are the same for every category, meaning that $w_i^t=w_i^{t'}$ for all $t,t'\in P$.

Fix an $i^\star\in N$ and consider a set of classification problems $\{\B{c}^t\}_{1\leq t\leq p}$ such that for each $\B{c}^t$, it is the case that $c^t_j(x_1)=(1,0,\ldots,0)$ for all $j\neq i^\star$ and $c^t_{i^\star}(x_1)=(0,\ldots,0,\underset{t-\text{th}}{1},0,\ldots,0)$.
Thus, as for each classification problem $\B{c}^t$ we have that $\alpha(\B{c}^t)(x_1)=(\sum_{j\neq i^\star}w_j^1,\ldots,0,\underset{t-\text{th}}{w_{i^\star}^t},0,\ldots,0)$, we obtain the constraint
$$\sum_{j\neq i^\star}w_j^1+w_{i^\star}^t=1\hspace{20mm}(t)$$  
In consequence, for any pair of classification problems $\B{c}^t,\B{c}^{t'}$, we can substract the constrains $(t)$ and $(t')$ and obtain that $w_{i^\star}^t=w_{i^\star}^{t'}$. 
This conclusion holds for all $t,t'\in P$, and for each $i\in N$.

So the FCAF associates to every individual $i\in N$ a weight $w_i$ such that $w_i\geq 0$ and that $\sum_{i=1}^nw_i=1$, and as it is non-dictatorial, we have that there is no $i\in N $ such that $w_i=1$.
Thus the FCAF is a Weighted Arithmetic Mean with a non-degenerate set of weights.
\end{proof}
The impossibility result from the crisp setting is obtained when the aggregator is a Weighted Arithmetic Mean and the set of weights is degenerate. 
As the aggregator must have a crisp outcome, this is the only possible set of weights (except for permutations on the name of the individuals).
One notable aspect of this result is that the weight assigned to each individual remains consistent across all categories, with no category receiving a higher weight than the others.
An implication of our main result is that if we want all individuals to be considered the same, then the weights must be the same for all individuals.
\begin{coro}
Let $m\geq 3$ and $m\geq p\geq 2$. 
An optimal Fuzzy Classification Aggregation Function satisfies Independence, Zero Unanimity, and Anonymity if, and only if, it is the Arithmetic Mean.   
\end{coro}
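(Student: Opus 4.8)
The plan is to derive the corollary directly from Theorem~\ref{main} by showing that Anonymity, together with the other two axioms, forces the set of weights to be the uniform one. First I would invoke Theorem~\ref{main}: since Anonymity trivially implies Non-Dictatorship (if some individual $i$ had $\alpha(\boldsymbol{c})(x)=c_i(x)$ for all $\boldsymbol{c}$ and $x$, a transposition swapping $i$ with another agent would change the outcome on some profile, contradicting $\alpha(\sigma(\boldsymbol{c}))=\alpha(\boldsymbol{c})$), any FCAF satisfying Independence, Zero Unanimity, and Anonymity already satisfies the hypotheses of Theorem~\ref{main} and is therefore a Weighted Arithmetic Mean $\alpha_{\boldsymbol{w}}$ with a non-degenerate set of weights $\boldsymbol{w}=(w_1,\ldots,w_n)$.

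The key step is then to show $w_1=\cdots=w_n$. Fix any two agents $i,k\in N$ and let $\sigma$ be the transposition interchanging them. Pick a classification profile $\boldsymbol{c}$ and an object $x$ on which the two agents disagree in some coordinate $t$, say $c_i(x)_t=a\neq b=c_k(x)_t$ (such a profile exists since $m\geq 2$, e.g.\ using the crisp classifications as in the proof of Theorem~\ref{main}). Applying Anonymity and the explicit formula for $\alpha_{\boldsymbol{w}}$, the equality $\alpha_{\boldsymbol{w}}(\sigma(\boldsymbol{c}))(x)_t=\alpha_{\boldsymbol{w}}(\boldsymbol{c})(x)_t$ becomes, after cancelling the common terms $\sum_{\ell\neq i,k}w_\ell c_\ell(x)_t$,
$$w_i a + w_k b = w_i b + w_k a,$$
so $(w_i-w_k)(a-b)=0$, and since $a\neq b$ we get $w_i=w_k$. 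As $i,k$ were arbitrary, all weights coincide, and since they sum to $1$ we obtain $w_i=\frac1n$ for every $i$; that is, $\alpha$ is the Arithmetic Mean. The converse direction is immediate: the Arithmetic Mean is a Weighted Arithmetic Mean with a non-degenerate set of weights, hence satisfies Independence and Zero Unanimity by Theorem~\ref{main}, and it is patently anonymous since permuting the summands does not change the sum.

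I do not anticipate a serious obstacle here; the only point requiring a little care is the construction of a witnessing profile $\boldsymbol{c}\in\mathcal{C}^N$ on which agents $i$ and $k$ genuinely differ in some coordinate, so that the factor $(a-b)$ is nonzero and can be divided out — but this is handled exactly as in the proof of Theorem~\ref{main}, where crisp classifications such as $c_i(x)=(1,0,\ldots,0)$ and $c_k(x)=(0,1,0,\ldots,0)$ are used and lie in $\mathcal{C}$. One should also note that the argument only uses Anonymity applied to transpositions, which suffices since transpositions generate the symmetric group, though for the weight-equality conclusion a single transposition per pair is already enough.
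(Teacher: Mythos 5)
Your proof is correct and follows exactly the route the paper intends (the paper states this corollary without an explicit proof, as a direct implication of Theorem~\ref{main}): Anonymity gives Non-Dictatorship, Theorem~\ref{main} gives a Weighted Arithmetic Mean, and applying Anonymity to a transposition on a profile where two agents disagree forces all weights to be equal.
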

As the Weighted Arithmetic Means with non-degenerate weights satisfy Unanimity and Fuzzy Consensus, stronger versions of Zero Unanimity, we have the following two corollaries.
\begin{coro}
Let $m\geq 3$ and $m\geq p\geq 2$. 
An optimal Fuzzy Classification Aggregation Function satisfies Independence, Unanimity, and Non-Dictatorship if, and only if, it is a Weighted Arithmetic Mean with a non-degenerate set of weights.
\end{coro}
\begin{coro}
Let $m\geq 3$ and $m\geq p\geq 2$. 
An optimal Fuzzy Classification Aggregation Function satisfies Independence, Fuzzy Consensus, and Non-Dictatorship if, and only if, it is a Weighted Arithmetic Mean with a non-degenerate set of weights.
\end{coro}
The impossibility results established by \citet{maniquet2016theorem} and \citet{alcantud2019liberalism} are successfully circumvented in our framework, even while preserving their original `consensual' axioms.
This breakthrough is primarily due to our refined and less restrictive definition of a surjective category, which serves as the foundation for our results.
By weakening the surjectivity conditions proposed in these earlier works in a natural and intuitive manner, we demonstrate that properties like Independence and Unanimity, which are often regarded as overly demanding, become more attainable in the context of fuzzy classifications.

We emphasize that Independence does not impose any condition between objects, in the sense that we can use different elementary FCAFs for each object.
But the combination of Optimality, Independence, and Zero Unanimity$\backslash$Unanimity$\backslash$Fuzzy Consensus, in this setting with its particular surjectivity conditions, forces the FCAF to also satisfy the symmetry condition, thus obtaining the following corollary.
\begin{coro}
Let $m\geq 3$ and $m\geq p\geq 2$.
An optimal and symmetric Fuzzy Classification Aggregation Function satisfies Zero Unanimity$\backslash$Unanimity$\backslash$Fuzzy Consensus, and Non-Dictatorship if, and only if, it is a Weighted Arithmetic Mean with a non-degenerate set of weights.
\end{coro}

It is worth mentioning that the previous results are only valid for $m\geq3$.
For the case of $m=2$ objects and $p=2$ categories, Independence is trivially satisfied (due to the surjectivity conditions).
This forces us to use a stronger axiom to obtain a characterization.

For optimal, symmetric and zero unanimous FCAFs, we can have FCAFs that are not Weighted Arithmetic Means, as the following result shows.
\begin{propi}\label{m=2ZU}
Let $m=p=2$.
An optimal and symmetric Fuzzy Classification Aggregation Function $\alpha$ satisfies Zero Unanimity if, and only if, there is a function $h:[-\frac{1}{2},\frac{1}{2}]^n\rightarrow[-\frac{1}{2},\frac{1}{2}]$ where 
\begin{equation*}\label{odd}
    h(x_1,\ldots,x_n)=-h(-x_1,\ldots,-x_n),
\end{equation*}
and
\begin{equation*}\label{hunanimous}
    h\left(\dfrac{1}{2},\ldots,\dfrac{1}{2}\right)=\dfrac{1}{2},
\end{equation*}
such that for each $\boldsymbol{c}\in\mathcal{C}^N$, $$\alpha (\boldsymbol{c})(x_t)=\left(h\left(\boldsymbol{c}^{x_t}_{11}-\frac{1}{2},\ldots,\boldsymbol{c}^{x_t}_{n1}-\frac{1}{2}\right)+\frac{1}{2},h\left(\boldsymbol{c}^{x_t}_{12}-\frac{1}{2},\ldots,\boldsymbol{c}^{x_t}_{n2}-\frac{1}{2}\right)+\frac{1}{2}\right)$$ for $t=1,2$.
\end{propi}
Before the proof, we present the following result by \citet{wagner1982allocation}.
\begin{nonteo}\label{star}{\em \citep{wagner1982allocation}}
Let $m=p=2$.
An optimal and symmetric mapping \mbox{$\alpha:[0,1]^{N\times 2}\rightarrow[0,1]^2$} satisfies Zero Unanimity if, and only if, there exists a function $h:[-\frac{1}{2},\frac{1}{2}]^n\rightarrow[-\frac{1}{2},\frac{1}{2}]$ where 
\begin{equation*}
    h(y_1,\ldots,y_n)=-h(-y_1,\ldots,-y_n),
\end{equation*}
and
\begin{equation*}
    h\left(\dfrac{1}{2},\ldots,\dfrac{1}{2}\right)=\dfrac{1}{2},
\end{equation*}    
such that for each $\B{c}\in [0,1]^{N\times 2}$, we have that $$\alpha(\B{c})=\left(h\left(\B{c}_{11}-\frac{1}{2},\ldots,\B{c}_{n1}-\frac{1}{2}\right)+\frac{1}{2},h\left(\B{c}_{12}-\frac{1}{2},\ldots,\B{c}_{n2}-\frac{1}{2}\right)+\frac{1}{2}\right).$$
\end{nonteo}
\begin{proof}[Proof of Proposition~\ref{m=2ZU}.]
That such an FCAF satisfies Zero Unanimity is straightforward to see.
For the if part, we use the result by \citet{wagner1982allocation}.
By Optimality and Zero Unanimity, for each category the hypotheses of \citet{wagner1982allocation} are satisfied and thus, we have a mapping $h$ satisfying \citeauthor{wagner1982allocation}'s conditions, namely $h^1$ and $h^2$.
What is left to see is that $h^1=h^2$.
We have the following conditions that must be satisfied by $\alpha (\boldsymbol{c})(x_1)$ and $\alpha (\boldsymbol{c})(x_2)$:
$$h^1\left(\boldsymbol{c}^{x_1}_{11}-\frac{1}{2},\ldots,\boldsymbol{c}^{x_1}_{n1}-\frac{1}{2}\right)+\frac{1}{2}+h^2\left(\boldsymbol{c}^{x_1}_{12}-\frac{1}{2},\ldots,\boldsymbol{c}^{x_1}_{n2}-\frac{1}{2}\right)+\frac{1}{2}=1,$$
and 
$$h^1\left(\boldsymbol{c}^{x_1}_{11}-\frac{1}{2},\ldots,\boldsymbol{c}^{x_1}_{n1}-\frac{1}{2}\right)+\frac{1}{2}+h^1\left(\boldsymbol{c}^{x_2}_{11}-\frac{1}{2},\ldots,\boldsymbol{c}^{x_2}_{n1}-\frac{1}{2}\right)+\frac{1}{2}=1.$$
It is easy to see that $\boldsymbol{c}^{x_1}_{j2}=\boldsymbol{c}^{x_2}_{j1}$ for all $i=1,\ldots,n$.
Thus we obtain that $h^1\left(\boldsymbol{c}^{x_2}_{11}-\frac{1}{2},\ldots,\boldsymbol{c}^{x_2}_{n1}-\frac{1}{2}\right)=h^2(\boldsymbol{c}^{x_2}_{11}-\frac{1}{2},\ldots,\boldsymbol{c}^{x_2}_{n1}-\frac{1}{2})$, concluding our proof.

\end{proof}
Despite its technical flavour, Proposition~\ref{m=2ZU} has as a particular case the class of Weighted Arithmetic Means, and also any other power mean with an odd exponent.\footnote{A power mean with an odd exponent is a function $f:\mathbb{R}^n\rightarrow\mathbb{R}$ such that $f(x_1,\ldots,x_n)=(\frac{1}{n}\sum_{i=1}^{n}x_i^m)^{\frac{1}{m}}$ where $m$ is an odd number.}
Proposition $1$ of \citet{cailloux2023classification} is a particular case of Proposition~\ref{m=2ZU}, where condition~\ref{odd} is implied by the complementary condition of their result.

A more general result, valid for $m\geq p\geq 2$ can be attained with some slightly changes in the model.
For this general case, let $A\subseteq X$, $|A|=a\geq 2$, and consider fuzzy classifications $c^\ast_A:A\rightarrow \mathbb{R}^P$, where the surjectivity requirements are such that for all $x\in A$, and for a given $s\in\mathbb{R}$, it is the case that $\sum_{t=1}^pc_A^\ast(x)_t=s$, and that $\sum_{j=1}^ac_A^\ast(x_j)_t\geq s$ if $s\geq 0$ for all $t\in P$, or $\sum_{j=1}^ac_A^\ast(x_j)_t< s$ if $s< 0$ for all $t\in P$.
We can think of this setting as $n$ operators assigning $s$ hours of use of at most $m$ machines into $p$ different tasks that need $s$ hours to be finished, or $s$ euros that at most $m$ persons have to spend in $p$ different projects that need at least $s$ euros to be done.
This setting can be relevant when not always all the machines or all the persons are available, but we need the decision process to be consistent across the different possibilities.
For each $A\subseteq X$ we use $\mathcal{C}_A^\ast$ to denote this set of classifications.
We define an FCAF$^\ast$ $\alpha$ as a set of FCAFs over subsets of $X$, that is, $\alpha=\{\alpha_A\}_{A\subseteq X}$, where $\alpha_A:{\mathcal{C}^{\ast}_A}^N\rightarrow \mathcal{C}^{\ast}_A$. 
We say that an FCAF$^\ast$ is optimal if $\alpha_A$ is optimal for all $A\subseteq X$.
We similarly define a symmetric, fuzzy consensual and non-dictatorial FCAF$^\star$.

We have that an optimal FCAF$^\ast$ satisfies the {\it $k$-allocation} property for any $k=a\geq2$ \citep{aczel1980characterizationwam}, that is, if for any $t\in P$ such that $\sum_{j=1}^ac^\ast_A(x_j)_t= s$ for all $i\in N$, it is the case that \mbox{$\sum_{j=1}^{m}\alpha(\boldsymbol{c}_A^\ast)(x_j)_t=s$}. 
The $k$-allocation property allows us to characterize the optimal and symmetric FCAF$^\ast$s that satisfy Fuzzy Consensus for all values of $m$ such that $m\geq p\geq 2$.
\begin{teo}\label{second}
 Let $m\geq p\geq 2$.
 An optimal and symmetric FCAF$^{\ast}$ satisfies Fuzzy Consensus and Non-Dictatorship if, and only if, it is a Weighted Arithmetic Mean with a non-degenerate set of weights.   
\end{teo}
\begin{proof}
\citet{aczel1980characterizationwam} show that a mapping $\boldsymbol{c}:\mathbb{R}^n\rightarrow\mathbb{R}$ satisfies the k-allocation property for $k=2,3$ and is bounded if, and only if, it is a Weighted Arithmetic Mean.
In our setting, symmetry implies that for a given category $t\in P$ and a given $A\subseteq X$, $\alpha_A$ can be seen as a set of elementary FCAFs $\{\alpha_x\}_{x\in A}$, where $\alpha_x=\alpha_y$ for all $x,y\in A$.
By optimality, the k-allocation property is satisfied by $\alpha_A$ for all $A\subseteq X$, in particular for the cases where $|A|=2$ or $|A|=3$.
Fuzzy Consensus implies they are bounded and thus, by \citet{aczel1980characterizationwam}, the mapping $\alpha_A$ is a Weighted Arithmetic Mean for all $A\subseteq X$.
Finally, Non-Dictatorship implies that the set of weights is non-degenerate.
By a similar proof to the one used in Theorem~\ref{main}, it is easy to see that the mappings $\alpha_A$ are the same for every category, concluding our proof.
\end{proof}
\subsection{Independence of the Axioms in Theorem~\ref{main}}\label{sectionindependence}
We conclude this section by showing that the axioms in our main result are independent. 
For each of the axioms in the statement of Theorem \ref{main}, we exhibit an example of an FCAF, different from a Weighted Arithmetic Mean, that satisfies all the axioms but one.
\begin{itemize}   
    \item \emph{All but Optimality:} let $m=3$ and $p=2$. 
    The FCAF $\alpha$ such that for all $\B{c}\in\mathcal{C}^N$, for any $x\in X$ such that $c_1(x)=(1,0)$ or $c_1(x)=(0,1)$, then $\alpha(\B{c})(x)=c_1(x)$; otherwise it is $\alpha(\B{c})(x)=(\frac{1}{2},\frac{1}{2})$.
    This FCAF is almost always constant, except when the individual $1$ proposes extreme classifications.
    
    \item \emph{All but Independence:} the FCAF $\alpha$ such that if $\B{c}\in \mathcal{C}^N$ is such that for all $i\in N$, $c_i(x_1)=e_i$ (the canonical vector), then $\alpha(\B{c})=c_1$; otherwise it is $\alpha(\B{c})=c_2$.
     \item \emph{All but Zero Unanimity:} the FCAF $\alpha$ such that for all $\B{c}\in \mathcal{C}^N$, $\alpha(\B{c})(x_1)=c_1(x_2)$,  $\alpha(\B{c})(x_2)=c_1(x_1)$, and $\alpha(\B{c})(x_j)=c_1(x_j)$ for \mbox{$j\neq 1,2$}.
     This FCAF is essentially a dictatorship \citep{cailloux2023classification}.

    \item \emph{All but Non-Dictatorship:} the FCAF such that for all $\B{c}\in\mathcal{C}^N$, it is the case that $\alpha(\mathbf{c})= c_i$ for all $\B{c}\in\mathcal{C}^N$, for a fixed $i\in N$.
\end{itemize}
\section{Final Remarks}\label{remarks}
In this work, we present an analysis from a fuzzy point of view, of the challenge of classifying $m$ objects into $p$ different categories.
The classifications of the objects by the agents and the rules are no longer crisp statements about to which category the object is assigned. 
Instead of that, the classifications are expressed in terms of degrees of assignment to each of the categories.
In the crisp setting for more than two objects and two categories, requiring the rule to fill each category with at least one object, to be independent and unanimous, implies the existence of an agent such that objects are classified according to the opinions of that agent \citep{maniquet2016theorem}.
Even with a weaker unanimity condition, this result can not have a major improvement \citep[][showing the existence of an essential dictator]{cailloux2023classification}.

The fuzzy setting proves advantageous, as the surjectivity conditions can assume diverse yet natural interpretations. 
Strong interpretations, akin to those in \cite{alcantud2019liberalism}, can extend the scope of impossibility results. 
However, by considering weaker surjectivity conditions and different versions of Unanimity, we can circumvent these limitations. 
Our findings show that rules that belong to the family of Weighted Arithmetic Means are the only ones that satisfy Optimality, Independence, and Zero Unanimity.
Under different interpretations of the classification process and the consensual axioms, we can obtain characterization results that hold for $m\geq p\geq 2$.

The significance of these results extends beyond their theoretical contributions to the field of social choice and aggregation theory. 
The Weighted Arithmetic Means provide a flexible and principled framework for addressing classification problems in fuzzy settings. 
Their applicability spans fields like economics, political science, and artificial intelligence, where fairness, consistency, and independence are essential. 
By bridging theoretical principles and practical decision-making, these rules enable equitable and efficient aggregation in complex real-world scenarios.

An intriguing avenue for further inquiry lies in extending our analysis to dynamic scenarios, where object classifications may evolve over time. 
Exploring the temporal dynamics of fuzzy classifications could provide valuable insights into the adaptability and stability of the proposed framework in real-world applications.\\

\textbf{Declarations of interest:} None.

\bibliography{ref}

\end{document}